\documentclass[conference,a4paper]{IEEEtran}
%%%% packages %%%%
\usepackage{amssymb, amsthm,bm,mathtools} 
\addtolength{\topmargin}{9mm}

\usepackage{color}
\usepackage{comment}
\usepackage[utf8]{inputenc} 
\usepackage[T1]{fontenc}

\interdisplaylinepenalty=1000 

\usepackage{url}             
\usepackage{cite} 
\usepackage[nameinlink,capitalize]{cleveref}
\usepackage[dvipsnames]{xcolor}

\usepackage{mleftright}       
\mleftright

\newtheorem{theorem}{Theorem}

\newtheorem{example}[theorem]{Example}

\newtheorem{definition}[theorem]{Definition}
\newtheorem{construction}[theorem]{Construction}
 
\usepackage{graphicx}         
\usepackage{siunitx, booktabs}   

\usepackage{tikz}
\usetikzlibrary{arrows,automata, positioning, calc, chains,
                fit, shapes, patterns}     
\usepackage{pgfplots}                        
\usepackage{subcaption}

%%%% commands %%%%
\newcommand{\6}{\mathbf}
\newcommand{\wtH}{\textnormal{wt}_\textnormal{H}} 
\newcommand{\dH}{\textnormal{d}_\textnormal{H}} 
\newcommand{\wtsH}{\textnormal{wt}} 
\newcommand{\wtT}{\textnormal{wt}_\mathcal{T}}

\newcommand{\wH}{{weighted-Hamming} }
\newcommand{\WH}{{Weighted-Hamming} }

\newcommand{\ball}{\mathcal{B}}

\DeclareMathOperator*{\argmax}{arg\,max}
\DeclareMathOperator*{\argmin}{arg\,min}

\pgfplotsset{
legend image code/.code={
\draw[mark repeat=2,mark phase=2]
plot coordinates {
(0cm,0cm)
(0.2cm,0cm)    
(0.4cm,0cm)  
};
}
}
\pgfplotsset{compat=1.18}
\begin{document}

\title{{\WH Metric for
Parallel Channels}} 

\author{%
  \IEEEauthorblockN{Sebastian Bitzer$^1$, Alberto Ravagnani$^2$,  Violetta Weger$^1$}\\
  \IEEEauthorblockA{
                      $^1$Technical University of Munich, Germany \\ \{sebastian.bitzer, violetta.weger\}@tum.de\\
                      $^2$Eindhoven University of Technology, the Netherlands \\   a.ravagnani@tue.nl
                    }
}

\maketitle

\begin{abstract}
Independent parallel $q$-ary symmetric channels are a suitable transmission model for several applications.
The \wH metric is tailored to this setting and enables optimal decoding performance.
We show that some weighted-Hamming-metric codes exhibit the unusual property that all errors beyond half the minimum distance can be corrected.
Nevertheless, a tight relation between the error-correction capability of a code and its minimum distance can be established.
Generalizing their Hamming-metric counterparts, upper and lower bounds on the cardinality of a code with a given \wH distance are obtained.
Finally, we propose a simple code construction with optimal minimum distance for specific parameters.
\end{abstract}

\section{Introduction}\label{sec:intro}

For numerous practical applications, a suitable transmission model is given by parallel channels.
Examples include bit-interleaved coded modulation~\cite{caire1998bicm}, multi-carrier communications, but also, e.g., physical unclonable functions~\cite{sadr2013weighted, maringer2021VBSC}.
Standard coding solutions include repeat-accumulate \cite{liu2006reliable}, Raptor~\cite{mitzenmacher2006theory,richard2010design}, turbo~\cite{sason2007coding}, LDPC~\cite{ratzer2003sparse, sason2007achievableLDPC}, and polar codes~\cite{hof2012capacity}.
In practice, these probabilistic coding schemes can operate close to capacity.
However, they fail to give strong worst-case guarantees on the decoding failure rate.

Algebraic coding theory provides such guarantees for the setting of multiple highly correlated parallel channels \cite{gabidulin1973combinatorial, ahlswede2002parallel}. 
In particular, a series of publications considers burst error correction via joint decoding of interleaved codes \cite{krachkovsky1997decoding, schmidt2009collaborative}.
The case of \emph{non-interacting} parallel $q$-ary symmetric channels has received less attention.
Coding schemes include mismatched decoding of merged subchannels, independent coding for the subchannels and simple concatenated schemes~\cite{max1968parallel}.
These solutions are suboptimal in general. 

This work approaches the setting of independent parallel channels via the \emph{\wH metric}~\cite{deza2009encyclopedia}.
This generalization of the Hamming metric corresponds to the weighted sum of the $\mathcal{T}$-weight proposed in \cite{simonis}, which records the Hamming weights for a partition of the coordinates. 
Unlike for Forney's generalized distance measure \cite{forney1966generalized}, the optimal scaling factors assigned to the individual indices depend only on the transition probabilities of the subchannels, not the received sequence which does not provide further reliability information.
Particular constructions of codes endowed with the weighted-Hamming metric have been considered in~\cite{bezzateev2013class,moon2018weighted}, even though the subject is still in its beginnings, and many open questions remain.

The remainder of this paper is structured as follows.
Section~\ref{sec:new_weight} recalls the \wH metric; we analyze its connection to parallel $q$-ary symmetric channels and its properties regarding error correction.
By generalizing their Hamming-metric counterparts, Section~\ref{sec:new_bounds} derives upper and lower bounds on the minimum \wH distance of a code and, thus, its error-correction capability.
In Section~\ref{sec:new_codes}, we propose a construction of $q$-ary error-correcting codes that achieve optimal \wH distance for specific parameters.
Finally, we conclude the paper in Section~\ref{sec:conc}.

\section{The \WH Metric}\label{sec:new_weight}

We consider transmissions over $m$ parallel $q$-ary Symmetric Channels (QSC), each of which is characterized by a known crossover probability $\rho_\ell\in(0,1-\tfrac{1}{q})$ and acts independently of the other subchannels. 
Let $\6{c} = (\6c_1,\ldots,\6c_m)\in\mathbb{F}_q^{n}$ be the transmitted sequence, where $\6c_\ell\in\mathbb{F}_q^{n_\ell}$ is transmitted over the $\ell$-th QSC. 
Denote as $\wtH(\6c)=\left\lvert\{i\mid c_i \neq 0\}\right\rvert$ the Hamming weight of $\6c$.
The probability of receiving $\6{r} = (\6r_1,\ldots,\6r_m)$ is 
\[
P(\6r\,\vert\,\6c)= \prod_{\ell=1}^{m} \left(\frac{\rho_\ell}{q-1}\right)^{\wtH(\6r_\ell-\6c_\ell)} (1-\rho_\ell)^{n_\ell - \wtH(\6r_\ell-\6c_\ell)}.
\]
Here, unlike for a single QSC, the Hamming distance of $\6r$ and $\6 c$ does not suffice to fully characterize $P(\6r\,\vert\,\6c)$ since it cannot take into account the reliabilities implied by the individual crossover probabilities $\rho_\ell$.
This motivates the  use of the \wH metric \cite{deza2009encyclopedia}.
\begin{definition}[Weighted-Hamming metric]
Let $\6c = (\6c_1,\ldots,\6c_m)\in\mathbb{F}^n_q$ with $\6c_\ell\in\mathbb{F}_q^{n_\ell}$ and $n = \sum_{\ell=1}^{m}n_\ell$.
Fix scaling factors $\bm{\lambda} = (\lambda_1, \ldots, \lambda_m)\in\mathbb{N}^m$, the \emph{\wH weight} of $\6c$ is then
\[ \text{wt}(\6c)=
\sum_{\ell=1}^{m} \lambda_\ell \cdot\wtH(\6{c}_\ell).
\]
As usual, the \emph{\wH distance} between $\6c$ and $\6c' \in \mathbb{F}_q^n$ is given by $d(\6c,\6c')= \wtsH(\6c-\6c')$, and the minimum \wH distance of a code $\mathcal{C} \subseteq \mathbb{F}_q^n$ is defined as 
\[
d(\mathcal{C}) = \min \{ d(\6c, \6c') \mid \6c, \6c' \in \mathcal{C}, \6c \neq \6c'\}.
\]
\end{definition}

As we already mentioned in the introduction, the \wH metric can be seen as a weighted summation of the $\mathcal{T}$-weight defined in \cite{simonis}.  
Here, we consider only the case where the partition is chosen such that the subsets correspond to the subchannels of the transmission model.
The $\mathcal{T}$-weight of $\6c = (\6c_1,\ldots,\6c_m)$ is then given by the $m$-tuple
\[
\wtT(\6c) = \left(\wtH(\6c_1), \ldots, \wtH(\6c_m) \right)\in \mathbb{N}^m.
\]
The following theorem confirms that the \wH metric with suitable scalars $\bm{\lambda}\in\mathbb{N}^m$ indeed contains sufficient information for optimal decoding. 
\begin{theorem}\label{thm:ml}
Let $\6{r} = (\6r_1,\ldots,\6r_m)$ with $\6r_\ell\in\mathbb{F}_q^{n_\ell}$ be a sequence obtained by transmitting $\6{c} = (\6c_1,\ldots,\6c_m)\in\mathcal{C}$ over $m$ parallel $q$-ary symmetric channels, each with individual error probability $\rho_\ell\in\big(0,1-\tfrac{1}{q}\big)$.
Then, there exist $\lambda_\ell\in\mathbb{N}$, for $\ell\in\{1,\ldots,m\}$, such that maximum-likelihood decoding is obtained by minimizing the \wH distance between $\6r$ and a codeword $\6c \in \mathcal{C}$.
\end{theorem}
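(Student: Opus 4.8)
The plan is to start from the maximum-likelihood rule, rewrite it as the minimization of a \emph{real}-weighted Hamming distance, and then argue that the real scalars can be rounded to natural numbers without ever changing the decoder's output.

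First I would take logarithms in the given expression for $P(\6r\mid\6c)$. Writing $d_\ell=\wtH(\6r_\ell-\6c_\ell)$, one obtains
\[
\log P(\6r\mid\6c)=\sum_{\ell=1}^{m}n_\ell\log(1-\rho_\ell)+\sum_{\ell=1}^{m}d_\ell\,\log\frac{\rho_\ell}{(q-1)(1-\rho_\ell)}.
\]
The first sum is independent of $\6c$, so maximum-likelihood decoding is equivalent to maximizing the second sum, i.e.\ to minimizing $\sum_{\ell=1}^{m}\mu_\ell\,\wtH(\6r_\ell-\6c_\ell)$ over $\6c\in\mathcal{C}$, where $\mu_\ell:=\log\frac{(q-1)(1-\rho_\ell)}{\rho_\ell}$. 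Here the hypothesis $\rho_\ell\in\big(0,1-\tfrac1q\big)$ is exactly what is needed: the inequality $\rho_\ell<1-\tfrac1q$ is equivalent to $\rho_\ell<(q-1)(1-\rho_\ell)$, which makes every $\mu_\ell$ a well-defined, strictly positive real number. Thus ML decoding coincides with nearest-codeword decoding in the weighted-Hamming metric with (real) scalars $\bm\mu=(\mu_1,\dots,\mu_m)$.

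The only delicate step is to replace $\bm\mu$ by a vector $\bm\lambda\in\mathbb{N}^m$ inducing the same decision. The key point is that the relevant sums $\sum_\ell\mu_\ell a_\ell$ range over a \emph{finite} set, namely as $\6a=\wtT(\6r-\6c)$ varies over the box $\prod_{\ell=1}^m\{0,1,\dots,n_\ell\}$. Let $\delta>0$ be the smallest positive difference between two distinct values of $\sum_\ell\mu_\ell a_\ell$ on this box. Choosing rationals $\lambda_\ell/N$ (with a common denominator $N$) approximating $\mu_\ell$ within $\delta/(2\sum_\ell n_\ell)$ forces, for all tuples $\6a,\6b$ in the box,
\[
\sum_{\ell=1}^{m}\mu_\ell a_\ell<\sum_{\ell=1}^{m}\mu_\ell b_\ell\ \Longrightarrow\ \sum_{\ell=1}^{m}\lambda_\ell a_\ell<\sum_{\ell=1}^{m}\lambda_\ell b_\ell,
\]
and a fine enough approximation keeps each $\lambda_\ell$ a positive integer. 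From this implication one concludes that any $\6c^\ast\in\mathcal{C}$ minimizing $\sum_\ell\lambda_\ell\wtH(\6r_\ell-\6c_\ell)$ also minimizes $\sum_\ell\mu_\ell\wtH(\6r_\ell-\6c_\ell)$, hence is an ML decision, for every received $\6r$ and every code $\mathcal{C}$.

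I expect the likelihood manipulation to be entirely routine; the step that needs care is the discretization argument — in particular, noticing that it suffices to preserve the \emph{strict} inequalities among the finitely many attainable weighted sums (ties being broken arbitrarily by any ML decoder), so that a single integer weight vector obtained by sufficiently accurate rational approximation works uniformly, independently of the received word and of the code.
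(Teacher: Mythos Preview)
Your likelihood manipulation is identical to the paper's: both take logs, separate off the $\6c$-independent term $\sum_\ell n_\ell\log(1-\rho_\ell)$, and arrive at minimizing $\sum_\ell \mu_\ell\,\dH(\6r_\ell,\6c_\ell)$ with the same positive reals $\mu_\ell=\log\frac{(q-1)(1-\rho_\ell)}{\rho_\ell}$.

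The discretization step is where the two proofs diverge. The paper simply asserts that using the real vector $(\lambda'_1,\dots,\lambda'_m)$ ``is equivalent to using $(\lambda_1,\dots,\lambda_m)\in\mathbb{N}^m$ for $\lambda_\ell=\alpha\lambda'_\ell$ and a suitable $\alpha\in\mathbb{R}$''. Taken literally this is not correct: the $\lambda'_\ell$ are logarithms of arbitrary rationals and are generically $\mathbb{Q}$-linearly independent, so no single scalar $\alpha$ makes all $\alpha\lambda'_\ell$ integers. Your route---observing that only finitely many weighted sums $\sum_\ell\mu_\ell a_\ell$ with $\6a\in\prod_\ell\{0,\dots,n_\ell\}$ ever occur, taking the minimum positive gap $\delta$, and approximating each $\mu_\ell$ by a rational $\lambda_\ell/N$ to accuracy $\delta/(2\sum_\ell n_\ell)$---is the honest way to pass to $\mathbb{N}^m$ while preserving every strict order relation, and your remark that ties may be broken arbitrarily handles the remaining case. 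So your argument is not merely a different packaging but actually fills a gap the paper glosses over; the cost is a slightly longer proof, the gain is that the conclusion is genuinely justified.
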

\begin{proof}
We generalize the classical argument for a single QSC, see, e.g., \cite{bossert1999channel}.
The maximum-likelihood estimate is defined as
\[
\hat{\6c} = \argmax_{(\6c_1,\ldots,\6c_m)\in\mathcal{C}} P\mleft((\6r_1,\ldots,\6r_m)\mid(\6c_1,\ldots,\6c_m)\mright).
\]
Since the QSCs are memoryless and non-interacting, we have
\[
\hat{\6c} = \argmax_{(\6c_1,\ldots,\6c_m)\in\mathcal{C}} \sum_{\ell=1}^{m} \tau_\ell\log\mleft(\frac{\rho_\ell}{q-1}\mright) + (n_\ell-\tau_\ell)\log\mleft(1-\rho_\ell\mright), 
\]
where $\rho_\ell$ denotes the error probability of the $\ell$-th QSC and $\tau_\ell = \dH(\6{r}_\ell, \6{c}_\ell)$. 
Since $n_\ell$ is independent of $\6{c}_\ell$, we obtain
\[
\hat{\6{c}} = \argmin_{(\6c_1,\ldots,\6c_m)\in\mathcal{C}} \sum_{\ell=1}^{m} \tau_\ell \lambda'_\ell,
\]
with $\lambda'_\ell = \log\big( \tfrac{1-\rho_\ell}{\rho_\ell}\big) + \log( q-1 ) > 0$ due to $\rho_\ell < 1-\tfrac{1}{q}$. 
Thus, using $(\lambda_1',\ldots,\lambda'_m) \in \mathbb{R}^m$ is equivalent to using $(\lambda_1,\ldots,\lambda_m) \in \mathbb{N}^m$ for $\lambda_\ell = \alpha\lambda_\ell'$ and a suitable $\alpha\in\mathbb{R}$.
\end{proof}

Due to \Cref{thm:ml}, a code that corrects all error patterns of \wH weight at most $t$ for suitable $\bm{\lambda}$ provides a guarantee on the resulting word error rate.
The following toy example illustrates the advantage of using the \wH metric compared to mismatched decoding in the Hamming metric, i.e., neglecting differing error probabilities of the subchannels. 
See Section~\ref{sec:new_codes} for a more general construction.

\begin{example}\label{ex:toy}
    Let $q=2$, $m=2$, $\bm{\lambda} = (1,2)$, and $n_1 = n_2 = 4$.
    Let $\mathcal{C}$ be spanned by $\6G = (\6I_4, \, \mathbf{1}-\6I_4),$ where $\6I$ denotes the identity matrix and $\mathbf{1}$ denotes the all-one matrix.
    $\mathcal{C}$ has dimension $4$ and minimum \wH distance $5$, which is optimal according to the bounds derived in Section \ref{sec:new_bounds}.
    For $\rho_1 = 0.125$ and $\rho_2 = 0.02$, $\mathcal{C}$ can correct all error patterns that occur with probability at least $0.011$.
    The largest code, which achieves the same performance via mismatched decoding in the Hamming metric, is the $2$-dimensional Cordaro-Wagner code \cite{cordaro1967optimum}.
    Independent coding for both subchannels only allows for transmitting a single bit using a length-$4$ repetition code in the second subchannel.
\end{example}

In \Cref{ex:toy}, we used that errors up to half the minimum distance are correctable.
In general, a linear code $\mathcal{C}$ can correct any pattern of 
weight $t$ if and only if $t\leq\tau(\mathcal{C})$, where
\[
\tau(\mathcal{C}) = \min_{\6c\in\mathcal{C}\setminus\{\60\}, \, \6r\in\mathbb{F}_q^n} \max\{\wtsH(\6r), \, \wtsH(\6c-\6r)\} - 1.
\]
One can show that $\tau(\mathcal{C})\geq \left\lfloor (d(\mathcal{C})-1)/{2}\right\rfloor$, with equality for \emph{normal} discrepancy functions\cite{silva2009metrics}. 
It is well-known that this is the case for common metrics, such as the Hamming or the rank metric.
The \wH distance is, however, not normal and there exist codes that can correct all error patterns with weight beyond half the minimum distance, as the following example shows.

\begin{example}\label{ex}
Let $m=2$, $n_1 = n_2 = 4$, $k=4$, $\bm{\lambda} = (2,7)$. 
We consider the code generated by $\mathbf{G} = (\mathbf{0},\, \mathbf{I}_4),$ which has minimum \wH distance $d(\mathcal{C}) = 7$. 
Despite having $\lfloor (d(\mathcal{C})-1)/2\rfloor=3$, this code can correct all error patterns of weight $t \leq 6 = \tau(\mathcal{C})$.
\end{example}

This is in contrast to the Hamming metric, where some errors of weight $t$ larger than $\lfloor (d-1)/2\rfloor$ can be corrected, but not all.
Consequently, the minimum distance might underestimate a code's true guaranteed error-correction capability.
As the following theorem shows, we can still bound the  error-correction capability by means of the minimum distance.

\begin{theorem}
Let $\lambda_1 \leq \ldots \leq \lambda_m.$ Then, it holds that
\[ \left\lfloor \frac{d(\mathcal{C})-1}{2}\right\rfloor \leq
\tau(\mathcal{C})\leq \left\lfloor\frac{d(\mathcal{C})+\lambda_m}{2}\right\rfloor -1.
\]
\end{theorem}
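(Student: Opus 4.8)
The plan is to prove the two inequalities independently. The left inequality is the general lower bound already recalled above, and I would obtain it as follows: for any nonzero codeword $\6c\in\mathcal{C}$ and any $\6r\in\mathbb{F}_q^n$, the triangle inequality for the \wH metric gives $\wtsH(\6r)+\wtsH(\6c-\6r)\ge\wtsH(\6c)\ge d(\mathcal{C})$, hence $\max\{\wtsH(\6r),\wtsH(\6c-\6r)\}\ge\lceil d(\mathcal{C})/2\rceil$; minimizing over all such pairs and subtracting $1$ yields $\tau(\mathcal{C})\ge\lceil d(\mathcal{C})/2\rceil-1=\lfloor(d(\mathcal{C})-1)/2\rfloor$.

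For the right inequality it suffices to exhibit a single pair $\6c\in\mathcal{C}\setminus\{\60\}$, $\6r\in\mathbb{F}_q^n$ with $\max\{\wtsH(\6r),\wtsH(\6c-\6r)\}\le\lfloor(d(\mathcal{C})+\lambda_m)/2\rfloor$. I would take $\6c$ to be a nonzero codeword of minimum weighted weight, so that $\wtsH(\6c)=d(\mathcal{C})=:D$, and restrict to vectors $\6r$ that agree with $\6c$ on a subset $S$ of its support and vanish elsewhere; then $\wtsH(\6r)+\wtsH(\6c-\6r)=D$, so the task reduces to choosing $S$ with $\wtsH(\6r)$ in the interval $\big[D-\lfloor(D+\lambda_m)/2\rfloor,\ \lfloor(D+\lambda_m)/2\rfloor\big]$. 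Enlarging $S$ one support coordinate at a time (in any order), the partial weighted weights form a strictly increasing sequence $0=s_0<s_1<\dots<s_N=D$ whose consecutive gaps lie in $\{\lambda_1,\dots,\lambda_m\}$ and are therefore at most $\lambda_m$. Let $j$ be the largest index with $s_j\le\lfloor(D+\lambda_m)/2\rfloor$, which exists since $s_0=0$. If $j=N$ then already $D\le\lfloor(D+\lambda_m)/2\rfloor$ and $\6r=\6c$ works; otherwise $s_{j+1}>\lfloor(D+\lambda_m)/2\rfloor$ and $s_{j+1}-s_j\le\lambda_m$, which together with the elementary estimate $2\lfloor(D+\lambda_m)/2\rfloor\ge D+\lambda_m-1$ gives $s_j\ge D-\lfloor(D+\lambda_m)/2\rfloor$. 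In either case, taking $S$ to be the first $j$ chosen coordinates makes both $\wtsH(\6r)=s_j$ and $\wtsH(\6c-\6r)=D-s_j$ at most $\lfloor(D+\lambda_m)/2\rfloor$, and subtracting $1$ finishes the bound.

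The step I expect to need the most care is the granularity argument in the upper bound: because enlarging $S$ changes the weighted weight in jumps of size up to $\lambda_m$, one generally cannot split $\6c$ into two exactly equal halves, and it is precisely the bound $\lambda_m$ on the jump size that controls the overshoot. It should be verified that the order in which coordinates are added is immaterial, that the case distinctions coming from the parities of $D\pm\lambda_m$ are all absorbed by the single inequality $2\lfloor(D+\lambda_m)/2\rfloor\ge D+\lambda_m-1$ (so no separate treatment of the regime $D\le\lambda_m$ is needed, as it simply corresponds to $j=N$), and that the hypothesis $\lambda_1\le\dots\le\lambda_m$ enters only to identify $\lambda_m$ as $\max_\ell\lambda_\ell$.
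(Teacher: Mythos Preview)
Your argument is correct. The lower bound is the standard triangle-inequality computation (the paper simply cites \cite{silva2009metrics} here), and for the upper bound your incremental argument goes through: since all $\lambda_\ell\in\mathbb{N}$, the partial sums $s_j$ are integers, so $s_{j+1}>\lfloor(D+\lambda_m)/2\rfloor$ gives $s_{j+1}\ge\lfloor(D+\lambda_m)/2\rfloor+1$, and combining $s_j\ge s_{j+1}-\lambda_m$ with $2\lfloor(D+\lambda_m)/2\rfloor\ge D+\lambda_m-1$ yields $s_j\ge D-\lfloor(D+\lambda_m)/2\rfloor$ as you claim.

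The route, however, differs from the paper's. There the splitting vector $\6r$ is built by an explicit alternating rule: within each block the nonzero coordinates of $\6c$ are distributed as evenly as possible, and across the blocks the leftover $\pm\tfrac{\lambda_\ell}{2}$ contributions from blocks of odd Hamming weight are assigned with alternating signs. One then obtains
\[
\max\{\wtsH(\6r),\wtsH(\6c-\6r)\}=\tfrac{1}{2}\sum_{\ell}\wtH(\6c_\ell)\lambda_\ell+\Bigl|\sum_{j}(-1)^{j}\tfrac{\lambda_{\ell_j}}{2}\Bigr|,
\]
and the monotonicity $\lambda_1\le\cdots\le\lambda_m$ is used to bound the alternating sum by $\lambda_m/2$. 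Your approach replaces this block-structured construction by a one-step-at-a-time ``bounded jump'' argument: since consecutive $s_j$ differ by at most $\lambda_m$, some $s_j$ must land in the target window. This is more elementary, avoids the alternating-sum estimate entirely, and, as you observe, uses the hypothesis only to identify $\lambda_m$ as $\max_\ell\lambda_\ell$; the paper's construction, by contrast, genuinely exploits the ordering of the $\lambda_\ell$ to control that alternating sum.
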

\begin{proof}
The first inequality holds for arbitrary metrics \cite{silva2009metrics}.
To prove the second inequality, we bound the error-correction capability as 
\[\tau(\mathcal{C}) \leq \min_{\6r\in\mathbb{F}_q^n} \max\{\text{wt}(\6r), \text{wt}(\6c-\6r)\} - 1,
\]
where we choose $\6c$ as a minimum \wH weight codeword, i.e., $\text{wt}(\6c) = d(\mathcal{C})$.
Denote as $\{i_1,\ldots,i_{w}\} \subseteq \{1, \ldots, n\}$ the set of nonzero coordinates of $\6c$.
We construct an explicit $\6r\in\mathbb{F}_q^n$ as
$$
r_i = 
\begin{cases}
c_i &\text{for }i \in\{i_{2j}\mid j=1,\ldots,\lfloor w/2\rfloor\},\\
0&\text{else.}\\
\end{cases}
$$
Let $w_\ell = \wtH(\6 c_\ell)$.
For blocks with even $w_\ell$, this construction assigns $\tfrac{w_\ell}{2}$ elements to $\6r_\ell$ and to $(\6c - \6r)_\ell$.
For blocks with odd $w_\ell$, one is assigned $\tfrac{w_\ell-1}{2}$ and the other $\tfrac{w_\ell + 1}{2}$ in alternating order.
Denote as $\{\ell_1, \ldots, \ell_{s}\} \subseteq \{1, \ldots, m\}$ the set of blocks with $w_\ell$ odd.
Then, 
\[
\max\{\text{wt}(\6r), \text{wt}(\6c-\6r)\} = \sum_{\ell = 1}^m \wtH(\6c_{\ell}) \tfrac{\lambda_\ell}{2} + \left\lvert\sum_{j=1}^{s} (-1)^{j}\tfrac{\lambda_{\ell_j}}{2} \right\rvert.
\]
Due to $\lambda_1\leq\ldots\leq \lambda_m$, one can bound the last expression as
\[
\left\lvert\sum_{j=1}^{s} (-1)^{j}\tfrac{\lambda_{\ell_j}}{2} \right\rvert\leq \tfrac{\lambda_{\ell_s}}{2} \leq \tfrac{\lambda_m}{2}
\]
and the statement follows by rounding.
\end{proof}

Note that the upper and lower bound on the error correction capability are tight, e.g., the code in Example~\ref{ex} attains the upper bound.
Since the actual error-correction capability of a code heavily depends on its structure, in the following coding-theoretic bounds, we consider the minimum distance as the main code parameter.
Nevertheless, we want to point out the opportunity for further research regarding bounds using $\tau(\mathcal{C})$.

\section{Bounds}\label{sec:new_bounds}

This section explores the generalization of known bounds for the Hamming metric to the \wH metric.
Let us start with a Singleton-like bound, which we derive using the anticode argument (see, e.g., \cite{anticode}).
\begin{theorem}[Singleton-like bound]\label{theo:singleton}
Let $\mathcal{C} \subseteq \mathbb{F}_{q}^n$ be a code with minimum distance $d$. 
Assume $\lambda_1 \leq \ldots \leq \lambda_{m}$ and let $\ell^* \in \{0,\ldots,m-1\}$ be the largest {s.t.} $\sum_{\ell=1}^{\ell^*} n_\ell \lambda_\ell < d$. 
Then,
\[
\log_q(\left\lvert \mathcal{C} \right\rvert) \leq \sum_{\ell=\ell^*+1}^m n_\ell - \left\lfloor \frac{d-1-\sum_{\ell=1}^{\ell^*} n_\ell \lambda_\ell}{\lambda_{\ell^* +1}}\right\rfloor.
\]
\end{theorem}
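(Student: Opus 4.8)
The plan is to use the anticode bound in the weighted-Hamming metric: if $\mathcal{A} \subseteq \mathbb{F}_q^n$ is an anticode of maximal weighted-Hamming diameter $D$, meaning $\mathrm{wt}(\6a - \6a') \leq D$ for all $\6a, \6a' \in \mathcal{A}$, and if a linear code $\mathcal{C}$ has minimum distance $d > D$, then $|\mathcal{C}| \cdot |\mathcal{A}| \leq q^n$. This follows from the standard argument that the cosets $\6c + \mathcal{A}$ are pairwise disjoint: if $\6c + \6a = \6c' + \6a'$ with $\6c \neq \6c'$, then $\6c - \6c' = \6a' - \6a$ has weight at most $D < d$, a contradiction. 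So the first step is to state (or cite) this anticode inequality for a general metric, which applies here since the weighted-Hamming weight is translation invariant.

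The heart of the proof is then choosing the right anticode $\mathcal{A}$. I would take $\mathcal{A}$ to be the set of all vectors supported on the first $\ell^*$ blocks entirely, plus a carefully chosen number of coordinates in block $\ell^* + 1$: concretely, let $\mathcal{A}$ consist of all $\6a = (\6a_1, \ldots, \6a_m)$ with $\6a_\ell = \60$ for $\ell > \ell^*+1$, with $\6a_\ell$ arbitrary for $\ell \leq \ell^*$, and with $\6a_{\ell^*+1}$ supported on a fixed set of $s$ coordinates, where $s = \lfloor (d - 1 - \sum_{\ell=1}^{\ell^*} n_\ell \lambda_\ell)/\lambda_{\ell^*+1} \rfloor$. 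The maximal weighted-Hamming diameter of this $\mathcal{A}$ is $\sum_{\ell=1}^{\ell^*} n_\ell \lambda_\ell + s \lambda_{\ell^*+1}$, attained by the difference of a full-weight vector and $\60$. By the choice of $\ell^*$ we have $\sum_{\ell=1}^{\ell^*} n_\ell \lambda_\ell < d$, so $s \geq 0$ is well-defined, and by the choice of $s$ this diameter is at most $d - 1 < d$. Hence the anticode bound applies, giving $|\mathcal{C}| \leq q^n / |\mathcal{A}|$. Computing $|\mathcal{A}| = q^{\sum_{\ell=1}^{\ell^*} n_\ell} \cdot q^s$ and taking $\log_q$ yields
\[
\log_q |\mathcal{C}| \leq n - \sum_{\ell=1}^{\ell^*} n_\ell - s = \sum_{\ell=\ell^*+1}^m n_\ell - \left\lfloor \frac{d - 1 - \sum_{\ell=1}^{\ell^*} n_\ell \lambda_\ell}{\lambda_{\ell^*+1}} \right\rfloor,
\]
which is exactly the claimed bound.

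I would present the argument in this order: first recall the anticode inequality and verify translation invariance of $\mathrm{wt}$; second, define $\ell^*$ and $s$ and check $s \geq 0$; third, construct $\mathcal{A}$ and compute its diameter, verifying it is $\leq d-1$; fourth, compute $|\mathcal{A}|$ and conclude. One subtlety worth a sentence is that the bound as stated is vacuous or automatically implied by $\log_q |\mathcal{C}| \leq n$ unless $s \geq 1$; but the inequality still holds formally when $s = 0$ since then $|\mathcal{A}| = q^{\sum_{\ell \le \ell^*} n_\ell}$ and the bound reads $\log_q|\mathcal{C}| \le \sum_{\ell > \ell^*} n_\ell$, still a genuine packing constraint.

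The main obstacle I anticipate is not any single hard step but rather getting the index bookkeeping exactly right — in particular confirming that the anticode $\mathcal{A}$ I chose genuinely has the stated maximal diameter (one must check no difference of two elements of $\mathcal{A}$ exceeds $\sum_{\ell=1}^{\ell^*} n_\ell \lambda_\ell + s\lambda_{\ell^*+1}$, which is immediate since each block's Hamming weight of a difference is bounded by the support size available in that block) and that the floor in the definition of $s$ lines up with the floor in the final statement. A secondary point is justifying that we may assume $\lambda_1 \le \cdots \le \lambda_m$ without loss of generality, since permuting blocks (and their lengths correspondingly) is an isometry; the theorem statement already assumes this ordering, so no extra work is needed, but it is worth noting that the "greedy" choice of filling up the cheapest blocks first is what makes $\ell^*$ the optimal cutoff.
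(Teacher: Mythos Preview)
Your proposal is correct and follows essentially the same anticode argument as the paper: the paper constructs the anticode as the linear code $\mathcal{C}'$ generated by $\6G'=(\6I_{k'}\; \6 0)$ with $k'=\sum_{\ell\le \ell^*}n_\ell + x$ and $x=\big\lfloor (d-1-\sum_{\ell\le \ell^*}n_\ell\lambda_\ell)/\lambda_{\ell^*+1}\big\rfloor$, which is exactly your $\mathcal{A}$ with $s=x$, and then invokes $|\mathcal{C}|\,|\mathcal{C}'|\le q^n$. One small remark: your coset-disjointness justification actually works for arbitrary (not only linear) $\mathcal{C}$, matching the generality of the theorem statement, so you need not restrict to linear codes in your write-up.
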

\begin{proof}
Since a code $\mathcal{C}' \subseteq \mathbb{F}_q^n$ with maximal \wH weight $<d$ has to be such that $\lvert\mathcal{C}\rvert\lvert\mathcal{C}'\rvert \leq q^n$ (as these codes can only intersect trivially), we can equivalently consider a lower bound on $\log_q(\lvert\mathcal{C}'\rvert)$.
Let $\ell^* \in \{1, \ldots, m-1\}$ be the largest integer such that $\sum_{\ell=1}^{\ell^*} n_\ell \lambda_\ell < d$ and define 
\[
x = \left\lfloor\frac{d-1-\sum_{\ell=1}^{\ell^*} n_\ell \lambda_\ell}{\lambda_{\ell^*+1}}\right\rfloor \in\{0,\ldots,n_{\ell^*+1}\}.
\]
We set $k'= \sum_{i=1}^{\ell^*} n_i + x$ and choose the code $\mathcal{C}'$ generated by $\6G'=\begin{pmatrix}\mathbf{I}_{k'} & \mathbf{0} \end{pmatrix} \in \mathbb{F}_q^{k'\times n}$.
Clearly, $\mathcal{C}'$ has maximum \wH distance less than $d$.
Due to $k\leq n -k'$, we have
\begin{equation*}
k \leq n - \sum_{\ell=1}^{\ell^*} n_\ell - x =  \sum_{i=\ell^*+1}^{m} n_i - \left\lfloor\frac{d-1-\sum_{\ell=1}^{\ell^*} n_\ell \lambda_\ell}{\lambda_{\ell^*+1}}\right\rfloor. \qedhere
\end{equation*}
\end{proof}

The bound given in Theorem \ref{theo:singleton} is tight in the sense that optimal codes exist.
A subset of these codes is given by Maximum Distance Separable (MDS) codes, which are the optimal codes for the Hamming-metric Singleton bound, as the following shows.

\begin{theorem}
Assume $\lambda_1\leq\ldots\leq\lambda_m$ and let $\mathcal{C} \subseteq \mathbb{F}_q^n$ be an MDS code of dimension $k$.
Then, $\mathcal{C}$ has minimum \wH distance
\[
d = \sum_{\ell = 1}^{\ell'} n_\ell \lambda_\ell +  \left(n-k+1-\sum_{\ell=1}^{\ell'} n_\ell\right) \cdot\lambda_{\ell'+1},
\]
where $\ell'\in\{0,\ldots,m\}$ is maximal with $\sum_{\ell=1}^{\ell'} n_\ell \leq n-k+1$.
This implies that $\mathcal{C}$ is Maximum \WH Distance (MWHD), i.e., it attains the Singleton-like bound. 
\end{theorem}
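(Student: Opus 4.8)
The plan is to determine the minimum \wH weight of $\mathcal{C}$ by reducing it to a small integer optimization, and then to verify that the resulting pair $(k,d)$ meets the Singleton-like bound of \Cref{theo:singleton} with equality. I would use only two standard properties of an $[n,k]$ MDS code $\mathcal{C}$: (i) every nonzero codeword has Hamming weight at least $n-k+1$; and (ii) for every coordinate set $T$ with $\lvert T\rvert=n-k+1$ there exists a codeword whose Hamming support is exactly $T$. Property (ii) follows from (i): if $\6G$ is a generator matrix, the $k\times(k-1)$ submatrix formed by the columns indexed by $\{1,\dots,n\}\setminus T$ has full column rank, so its left kernel is one-dimensional; picking a nonzero $\6u$ in it, the codeword $\6u\6G$ vanishes on $\{1,\dots,n\}\setminus T$, hence is supported in $T$, and by (i) its support equals $T$.

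Next I would rephrase the distance as an optimization over block-weight profiles: for $\6c=(\6c_1,\dots,\6c_m)$ with $w_\ell=\wtH(\6c_\ell)$ we have $\wtsH(\6c)=\sum_{\ell=1}^m \lambda_\ell w_\ell$, and by (i) the vector $(w_1,\dots,w_m)$ satisfies $0\le w_\ell\le n_\ell$ and $\sum_\ell w_\ell\ge n-k+1$. Since every $\lambda_\ell\ge 1$, the minimum of $\sum_\ell \lambda_\ell w_\ell$ over such vectors is attained with $\sum_\ell w_\ell=n-k+1$, and because $\lambda_1\le\cdots\le\lambda_m$ a routine exchange argument (shift a unit of weight from a block with larger $\lambda$ to a not-yet-saturated block with smaller $\lambda$) shows the optimum is the greedy profile $w_\ell=n_\ell$ for $\ell\le\ell'$, $w_{\ell'+1}=n-k+1-\sum_{\ell\le\ell'}n_\ell$, and $w_\ell=0$ otherwise, whose value is exactly the claimed $d$. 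This yields $d(\mathcal{C})\ge d$; the reverse inequality follows from (ii) applied to the coordinate set realizing this greedy profile (numbering the blocks consecutively, take $T=\{1,\dots,n-k+1\}$), since the resulting codeword has \wH weight exactly $d$.

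For the MWHD claim I would show that \Cref{theo:singleton} holds with equality at $(k,d)$. Let $P(j)$ be the \wH weight of the all-ones vector on the first $j$ coordinates; this is the maximum \wH weight of the code generated by $(\6{I}_j\ \6{0})$, which for $j=k'$ is the anticode $\mathcal{C}'$ used in the proof of \Cref{theo:singleton}. Now $P$ is strictly increasing in $j$ (each unit step adds some $\lambda_\ell\ge 1$), and $P(n-k+1)=d$ by the formula just derived; hence the largest $j$ with $P(j)<d$ is $j=n-k$, so that anticode has dimension exactly $n-k$, and the argument there gives $k\le n-(n-k)=k$, i.e.\ $\mathcal{C}$ attains the Singleton-like bound. (Equivalently, one checks that the index $\ell^*$ of \Cref{theo:singleton} equals $\ell'$ when $n-k+1>\sum_{\ell\le\ell'}n_\ell$ and $\ell'-1$ otherwise, and that the floor term then collapses using $\lambda_{\ell^*+1}\in\mathbb{N}$.)

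I expect the only genuine difficulty to be boundary bookkeeping: the degenerate cases where $n-k+1$ is exactly a partial sum $\sum_{\ell\le\ell'}n_\ell$ (so $w_{\ell'+1}=0$ and the active block index shifts), and where $\ell'=m$, which forces $k=1$ and makes $\lambda_{\ell'+1}$ appear only with coefficient $0$. Routing the final step through the monotone function $P(j)$ instead of through $\ell^*$ sidesteps most of this; the exchange argument and fact~(ii) are otherwise completely standard.
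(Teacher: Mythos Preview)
Your proposal is correct and follows essentially the same approach as the paper: exhibit a codeword supported on the first $n-k+1$ coordinates, argue via $\lambda_1\le\cdots\le\lambda_m$ that it has minimum \wH weight, and then verify equality in the Singleton-like bound. Your optimization/exchange formulation and the $P(j)$ device are a more careful packaging of the same ideas; in particular your treatment of the boundary case $n-k+1=\sum_{\ell\le\ell'}n_\ell$ (where $\ell^*=\ell'-1$ rather than $\ell'$) is more precise than the paper's, which silently assumes strict inequality.
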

\begin{proof}
Note that any $k$ positions of the MDS code form an information set.
Hence, there is a codeword $\6c$ with support $\{0,\ldots, n-k\}$.
Due to $\lambda_1\leq\ldots\leq\lambda_m$, this codeword has the smallest non-zero \wH weight among all  codewords of $\mathcal{C}$.
All elements of the first $\ell'$ blocks are contained in the support of $\6c$; the remaining $n-k+1-\sum_{\ell=1}^{\ell'} n_\ell$ non-zero entries are in block $\ell'+1$.
This gives the weight of $\6c$ and, therefore, the minimum distance of $\mathcal{C}$.

To compute the maximum dimension of a code with  minimum distance given as in Theorem~\ref{theo:singleton}, we need the largest $\ell^*$ such that
\[
\sum_{\ell=1}^{\ell^*} n_\ell\lambda_\ell < d = \sum_{\ell = 1}^{\ell'} n_\ell \lambda_\ell +  \left(n-k+1-\sum_{\ell=1}^{\ell'} n_\ell\right) \lambda_{\ell'+1}.
\]
Due to the definition of $\ell'$, $n-k+1-\sum_{\ell=1}^{\ell'} n_\ell < n_{\ell'+1}$ holds.
Consequently, $\ell' = \ell^*$ and the expression 
\[
\sum_{\ell=\ell^*+1}^m n_\ell - \left\lfloor \frac{d-1-\sum_{\ell=1}^{\ell^*} n_\ell \lambda_\ell}{\lambda_{\ell^* +1}}\right\rfloor
\]
simplifies to $k$. 
Hence, $\mathcal{C}$ is MWHD.
\end{proof}
It is well known that MDS codes are dense for $q$ going to infinity; thus, MWHD codes are also dense in this setting. 
To derive further density results as well as sphere-packing/covering bounds, the size of the \wH balls 
$\ball_q(n,r, \bm{\lambda}) = \{\6x \in \mathbb{F}_{q}^n \mid \wtsH(\6x) \leq r\}$ is required. 
Let
\begin{equation} \label{Ls}
\Lambda(s) = \left\{ (w_1,\ldots,w_m) \mathrel{\Big|}  \sum_{\ell=1}^m w_\ell \lambda_\ell=s, 0\leq w_\ell \leq n_\ell\right\}
\end{equation}
denote the set of $\mathcal{T}$-weights that correspond to the \wH weight $s$.
Then, the ball size is given by 
\[
\left\lvert \ball_q(n,r, \bm{\lambda}) \right\rvert 
= \sum_{s=0}^r \sum_{\6w\in\Lambda(s)} \prod_{\ell=1}^m \binom{n_\ell}{w_\ell} (q-1)^{\sum_{\ell=1}^m w_\ell},
\]
which can be computed efficiently using dynamic programming in a similar way as in \cite{puchinger2022generic}.
The sphere-packing and sphere-covering bounds follow from well-known arguments.
\begin{theorem}[Sphere-packing and sphere-covering bounds]\label{theo:Hamming}
    Denote by $A_q(n,d, \bm{\lambda})$ the largest size of a code in $\mathbb{F}_{q}^n$ of minimum distance $d$. 
    The \wH sphere-packing/Hamming bound~\cite{bezzateev2013class} states that
    \begin{align*} 
    A_q(n,d,\bm{\lambda}) & \leq \frac{q^{n}}{\lvert \ball_q(n,\lfloor \frac{d-1}{2} \rfloor, \bm{\lambda}) \rvert }.
    \end{align*}
    The sphere-covering/Gilbert-Varshamov bound states that
    \begin{align*} 
    A_q(n,d,\bm{\lambda}) & \geq \frac{q^{n}}{\lvert \ball_q(n,d-1, \bm{\lambda})\rvert }.
    \end{align*}
\end{theorem}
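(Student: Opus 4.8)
The plan is to carry out the two classical arguments — packing pairwise-disjoint balls for the upper bound, and covering by greedily chosen balls for the lower bound — verbatim in the \wH metric. The only point that genuinely needs a word of justification is that $\wtsH$ induces a translation-invariant metric on $\mathbb{F}_q^n$, so that ``the ball of radius $r$'' has a center-independent cardinality equal to $\lvert\ball_q(n,r,\bm{\lambda})\rvert$.

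First I would record that $d(\6c,\6c')=\wtsH(\6c-\6c')$ is a metric: it is symmetric, it vanishes exactly when $\6c=\6c'$ because each $\lambda_\ell$ is a positive integer, and it satisfies the triangle inequality because $\wtsH$ is subadditive, which in turn follows from the block-wise subadditivity of $\wtH$ and from $\lambda_\ell\geq 0$. Writing $\ball_q(\6c,r)=\{\6x\in\mathbb{F}_q^n\mid d(\6x,\6c)\leq r\}$, the map $\6x\mapsto\6x-\6c$ is a bijection onto $\ball_q(\60,r)$, so $\lvert\ball_q(\6c,r)\rvert=\lvert\ball_q(n,r,\bm{\lambda})\rvert$ for every center $\6c$; the stated formula for this cardinality is just the result of grouping the vectors of a given \wH weight $s$ according to their $\mathcal{T}$-weight, i.e., according to which tuple in $\Lambda(s)$ their block-wise Hamming weights form, and counting $\prod_{\ell}\binom{n_\ell}{w_\ell}(q-1)^{w_\ell}$ vectors for each such tuple.

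For the sphere-packing bound I would take a code $\mathcal{C}$ with $\lvert\mathcal{C}\rvert=A_q(n,d,\bm{\lambda})$ and minimum distance $d$, and set $r=\lfloor(d-1)/2\rfloor$. If some $\6x$ lay in $\ball_q(\6c,r)\cap\ball_q(\6c',r)$ for distinct $\6c,\6c'\in\mathcal{C}$, the triangle inequality would force $d\leq d(\6c,\6c')\leq 2r\leq d-1$, a contradiction; hence the $\lvert\mathcal{C}\rvert$ balls of radius $r$ around the codewords are pairwise disjoint, and counting the points they occupy inside $\mathbb{F}_q^n$ gives $\lvert\mathcal{C}\rvert\cdot\lvert\ball_q(n,r,\bm{\lambda})\rvert\leq q^n$. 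For the sphere-covering bound I would instead take a code $\mathcal{C}$ of minimum distance at least $d$ that is maximal with respect to inclusion among such codes (it exists since $\mathbb{F}_q^n$ is finite). For any $\6x\in\mathbb{F}_q^n$, either $\6x\in\mathcal{C}$, or else $\mathcal{C}\cup\{\6x\}$ has minimum distance $<d$, which produces some $\6c\in\mathcal{C}$ with $d(\6x,\6c)\leq d-1$; in both cases $\6x\in\ball_q(\6c,d-1)$ for some $\6c\in\mathcal{C}$. Thus the radius-$(d-1)$ balls around the codewords cover $\mathbb{F}_q^n$, so $q^n\leq\lvert\mathcal{C}\rvert\cdot\lvert\ball_q(n,d-1,\bm{\lambda})\rvert$, and the claim follows from $A_q(n,d,\bm{\lambda})\geq\lvert\mathcal{C}\rvert$.

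I do not expect a real obstacle here: both arguments are literal transcriptions of the Hamming-metric proofs, and the only feature specific to the weighted setting — evaluating the common ball cardinality through the sets $\Lambda(s)$ of \eqref{Ls} — is exactly the formula already recorded before the theorem and computed efficiently via dynamic programming. The single point I would be careful to state explicitly in the write-up is the translation invariance of $\wtsH$, since that is what makes the packing and covering counts independent of the chosen centers.
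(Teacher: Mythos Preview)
Your proposal is correct and is exactly the ``well-known argument'' the paper alludes to without spelling out: the paper gives no proof beyond that remark, so you are simply filling in the standard packing/covering reasoning together with the observation that translation invariance of $\wtsH$ makes all balls of a given radius equicardinal. There is nothing to add or correct.
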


Let us now examine the asymptotic behavior of the Gilbert-Varshamov bound. 
We consider the setting where  $n_\ell = \lfloor \alpha_\ell \cdot n \rceil$ with fixed $\alpha_\ell$ {s.t.} $\sum_{\ell=1}^m \alpha_\ell = 1$.
The \wH weight of a vector $\6r \in \mathbb{F}_q^{n}$ is at most $M = \sum_{\ell = 1}^{m} \lambda_\ell n_\ell$.
Let us denote the relative  minimum \wH  distance as $\delta = d/M$, the asymptotic size of the balls as 
\[
g_q(\delta, \bm{\lambda})= \lim_{n \to \infty} \frac{1}{n} \log_q( \lvert \ball_q(n, \delta M, \bm{\lambda})\rvert),
\]
and the maximal information rate as  
\[
R_q(n,d, \bm{\lambda}) = \frac{1}{n} \log_q(A_q(n,d, \bm{\lambda})).
\]
Then, the asymptotic Gilbert-Varshamov bound states that
\[
\liminf\limits_{n \to \infty} R_q(n,\delta M, \bm{\lambda}) \geq 1- g_q(\delta, \bm{\lambda}).
\]

Let $D$ be the minimal value in $[0,1]$, such that $g_q(D, \bm{\lambda})=1.$
We can now show that random codes $\mathcal{C} \subseteq \mathbb{F}_q^n$ attain the Gilbert-Varshamov bound with high probability. 
The proof follows also directly from \cite[Theorem 20]{free}.
\begin{theorem}
For arbitrary $\delta \in [0,D),$  arbitrary $0 < \varepsilon < 1-g_q(\delta, \bm{\lambda})$ and sufficiently large $n$, the following holds for $k = \lceil (1-g_q(\delta, \bm{\lambda})-\varepsilon)n\rceil.$
If $\6G \in \mathbb{F}_q^{k \times n}$ is chosen uniformly at random, then the code generated by $\6G$ has rate $k/n$ and relative minimum  \wH distance $\delta$ with probability at least
\[
1-q^{(1-\varepsilon n)} \geq 1- e^{-\Omega(n)}.
\]  
\end{theorem}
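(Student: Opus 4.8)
The plan is to run the classical Gilbert--Varshamov random-coding argument, now carried out in the \wH metric; the statement is a direct specialization of \cite[Theorem 20]{free} to the translation-invariant weight $\wtsH$, and I would present the argument as the explicit form of that result. First I would dispose of the degenerate case $\delta=0$ (there the claim merely says that a uniform $k\times n$ matrix has full row rank except with probability $<q^{k-n}\le q^{1-\varepsilon n}$), then fix $\delta>0$ and set $d=\lceil\delta M\rceil$. It suffices to bound the probability of the \emph{bad event} that some nonzero message $\6u\in\mathbb{F}_q^k\setminus\{\60\}$ produces a codeword with $\wtsH(\6u\6G)\le d-1$: if this event is avoided then $\6u\6G\neq\60$ for every $\6u\neq\60$, so $\6G$ has full row rank and the code has rate $k/n$; moreover, since \wH weights are integers and $d-1=\lceil\delta M\rceil-1<\delta M$, no nonzero codeword has \wH weight $<\delta M$, hence the relative minimum \wH distance is at least $\delta$.

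The key point I would use is that for each fixed $\6u\neq\60$ the vector $\6u\6G$ is uniformly distributed on $\mathbb{F}_q^n$ when $\6G$ is uniform: conditioning on all rows of $\6G$ except one whose coefficient in $\6u$ is nonzero already makes $\6u\6G$ uniform. Hence, using $d-1<\delta M$ and monotonicity of ball sizes in the radius,
\[
\Pr\bigl[\wtsH(\6u\6G)\le d-1\bigr]=\frac{\lvert\ball_q(n,d-1,\bm{\lambda})\rvert}{q^n}\le\frac{\lvert\ball_q(n,\delta M,\bm{\lambda})\rvert}{q^n}.
\]
A union bound over the fewer than $q^k$ nonzero messages then gives $\Pr[\text{bad}]<q^{k}\,\lvert\ball_q(n,\delta M,\bm{\lambda})\rvert\,q^{-n}$. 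I would finish by substituting $k=\lceil(1-g_q(\delta,\bm{\lambda})-\varepsilon)n\rceil\le(1-g_q(\delta,\bm{\lambda})-\varepsilon)n+1$ and inserting the sharp ball-size estimate $\lvert\ball_q(n,\delta M,\bm{\lambda})\rvert\le q^{g_q(\delta,\bm{\lambda})n}$, valid for all sufficiently large $n$: the exponents collapse to $1-\varepsilon n$, so $\Pr[\text{bad}]<q^{1-\varepsilon n}$. The final inequality $q^{1-\varepsilon n}\le e^{-\Omega(n)}$ is immediate, since $q^{1-\varepsilon n}=e^{(\ln q)(1-\varepsilon n)}$ with $\varepsilon\ln q>0$ a positive constant.

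I expect the only genuinely non-routine ingredient to be the ball-size bound $\lvert\ball_q(n,\delta M,\bm{\lambda})\rvert\le q^{g_q(\delta,\bm{\lambda})n}$ \emph{without} a $q^{o(n)}$ correction factor, so that the leading constant comes out exactly as $q$ rather than $q^{1+o(1)}$; this is the \wH analogue of the classical entropy bound on the size of a $q$-ary Hamming ball, and is precisely the technical content imported from \cite{free} (alternatively one can obtain it by a Chernoff-type bound applied to the weighted sum of products of binomials that expresses $\lvert\ball_q(n,r,\bm{\lambda})\rvert$). The hypothesis $\delta\in[0,D)$ enters only to ensure $g_q(\delta,\bm{\lambda})<1$, so that $k\ge1$ for large $n$ and the statement is non-vacuous; the remaining ingredients --- uniformity of $\6u\6G$, the union bound, and the floor/ceiling bookkeeping relating $d$ and $\delta M$ --- are entirely standard.
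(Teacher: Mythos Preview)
Your proposal is correct and follows essentially the same route as the paper: uniformity of $\6u\6G$ for nonzero $\6u$, a union bound over $\mathbb{F}_q^k\setminus\{\60\}$, and the ball-size estimate $\lvert\ball_q(n,\delta M,\bm{\lambda})\rvert\le q^{n g_q(\delta,\bm{\lambda})}$. The only cosmetic difference is that the paper handles the full-rank condition separately via the product $\prod_{i=0}^{k-1}(1-q^{i-n})$, whereas you more economically fold it into the same bad event; your flagging of the sharp ball-size inequality as the one non-routine input is also apt, since the paper simply asserts it.
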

\begin{proof}
The probability that $\6G \in \mathbb{F}_q^{k \times n}$ generates a code of dimension $k$, is well known to be $\prod_{i=0}^{k-1} (1-q^{i-n}),$ which tends to one as $n$ grows, and is in particular larger than $1-e^{-\Omega(n)}.$
Note that for any non-zero $\6x \in \mathbb{F}_q^k$, we have that $\6x\6G$ is uniform at random in $\mathbb{F}_q^n$ and thus, the probability that $\text{wt}(\6x\6G) \leq \delta M$ is at most 
\[
\frac{\lvert \ball_q(n, \delta M, \bm{\lambda}) \rvert}{q^n} \leq q^{n g_q(\delta, \bm{\lambda})-n}.
\]
Using a union bound over all non-zero $\6x \in \mathbb{F}_q^k$, we get that the probability for the code to have minimum distance $\delta M$ is bounded from above by 
$q^{k} q^{n g_q(\delta, \bm{\lambda})-n} \leq q^{1-\varepsilon \cdot n}.$ 
Thus, the code has relative minimum  \wH distance $\delta$ with probability at least
$1- q^{1-\varepsilon n} \geq 1- e^{-\Omega(n)}$.
\end{proof}

We continue by providing a Plotkin-like bound for linear codes endowed with the \wH metric. 
Recall that the maximal weight is given by $M= \sum_{\ell=1}^m n_\ell \lambda_\ell.$ 
\begin{theorem}[Plotkin-like bound]\label{theo:plot}
    Let $\mathcal{C} \subseteq \mathbb{F}_{q}^n$ be a linear code with minimum \wH distance $d$. 
    If $\smash{d > \big(\frac{q-1}{q}\big) M}$, then
    $$d \leq \frac{ \lvert \mathcal{C} \rvert}{\lvert \mathcal{C} \rvert -1} \left(\frac{q-1}{q}\right) M.$$
     \end{theorem}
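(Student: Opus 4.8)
The plan is to adapt the classical double-counting proof of the linear Plotkin bound, reweighting each coordinate by the appropriate $\lambda_\ell$. I would study the quantity $S = \sum_{\mathbf{c} \in \mathcal{C}} \wtsH(\mathbf{c})$ and estimate it from below and from above. For the lower bound, the zero codeword contributes $0$ and every other codeword has \wH weight at least $d$, so $S \geq (\lvert\mathcal{C}\rvert - 1)\, d$.

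For the upper bound, I would expand $S$ via the definition of the \wH weight: $S = \sum_{\ell=1}^m \lambda_\ell \sum_{\mathbf{c}\in\mathcal{C}} \wtH(\mathbf{c}_\ell) = \sum_{\ell=1}^m \lambda_\ell \sum_{i\text{ in block }\ell} N_i$, where $N_i = \lvert\{\mathbf{c}\in\mathcal{C} \mid c_i \neq 0\}\rvert$. The key step is the standard linearity observation: for each coordinate $i$, the evaluation map $\mathbf{c}\mapsto c_i$ is $\mathbb{F}_q$-linear, so its image is either $\{0\}$ or all of $\mathbb{F}_q$; in the latter case all fibers have the same size $\lvert\mathcal{C}\rvert/q$, whence $N_i = \tfrac{q-1}{q}\lvert\mathcal{C}\rvert$, and in any case $N_i \leq \tfrac{q-1}{q}\lvert\mathcal{C}\rvert$. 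Summing over all coordinates, grouped by block, gives $S \leq \tfrac{q-1}{q}\lvert\mathcal{C}\rvert \sum_{\ell=1}^m \lambda_\ell n_\ell = \tfrac{q-1}{q}\, M\, \lvert\mathcal{C}\rvert$, using that $M = \sum_{\ell=1}^m n_\ell \lambda_\ell$ as defined before the theorem.

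Combining the two estimates yields $(\lvert\mathcal{C}\rvert-1)\,d \leq \tfrac{q-1}{q} M \lvert\mathcal{C}\rvert$; dividing by $\lvert\mathcal{C}\rvert - 1 > 0$ gives the claimed inequality. I would also remark that the hypothesis $d > \big(\tfrac{q-1}{q}\big)M$ is exactly what makes this statement equivalent to the usual size bound $\lvert\mathcal{C}\rvert \leq d\big/\big(d - \tfrac{q-1}{q}M\big)$ obtained by rearranging the same inequality for $\lvert\mathcal{C}\rvert$; without it the rearrangement is vacuous.

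I do not expect a genuine obstacle here: the argument is the Hamming-metric proof with coordinates reweighted, and the scalars $\lambda_\ell$ factor cleanly out of the per-coordinate count. The only point requiring care is the bound $N_i \leq \tfrac{q-1}{q}\lvert\mathcal{C}\rvert$, which truly uses the linearity of $\mathcal{C}$ — for nonlinear codes one would instead need the average-distance argument and the hypothesis $d > \big(\tfrac{q-1}{q}\big)M$ would enter the derivation essentially — and bookkeeping to match $\sum_\ell \lambda_\ell n_\ell$ with $M$.
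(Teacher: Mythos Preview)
Your proposal is correct and is essentially the same argument as the paper's: both bound the total (equivalently, average) \wH weight of the code from below by $(\lvert\mathcal{C}\rvert-1)d$ and from above by $\tfrac{q-1}{q}M\lvert\mathcal{C}\rvert$ via the per-coordinate linearity observation, then divide. The paper states this more tersely by invoking the average weight on $\mathbb{F}_q^n$ and the ``classical Plotkin argument,'' while you spell out the double count explicitly.
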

\begin{proof}
This follows directly from the fact that the average  \wH weight on $\mathbb{F}_{q}^n$ is
\[ 
\bar{d}\coloneqq  \sum_{\ell=1}^m \frac{q-1}{q}{n_\ell} \lambda_{\ell}= \left(\frac{q-1}{q}\right) M,
\]
and that the average weight of a code $\mathcal{C}$ is 
\[
\overline{\text{wt}}(\mathcal{C})= \frac{1}{\lvert \mathcal{C} \rvert} \sum_{\6c \in \mathcal{C}} \text{wt}(\6c) \leq \bar{d}.
\]
The classical Plotkin argument gives $d \leq \frac{\lvert \mathcal{C} \rvert}{\lvert \mathcal{C} \rvert -1} \overline{\text{wt}}(\mathcal{C})$. 
For further details, the reader is referred to \cite{plotkin}.
\end{proof}

We now turn to a Linear Programming (LP) bound for the \wH metric.
As usual, the weight enumerator of a code $\mathcal{C}$ with respect to the \wH metric 
is defined as 
$A_i(\mathcal{C})= \left\lvert \{ \6c \in \mathcal{C} \mid \text{wt}(\6c)=i \} \right\rvert$ for $0\leq i \leq M$ and $M=\sum_{\ell=1}^{m} n_\ell \lambda_\ell$. 
However, to obtain an LP bound, we use the finer partition due to the $\mathcal{T}$-weight enumerator
\[
A_{\6i}(\mathcal{C})= \left\lvert \big\{ \6c \in \mathcal{C} \mid \text{wt}_\textnormal{H}(\6c_\ell)=i_\ell \text{ for all }  \ell \in \{1, \ldots,m\} \big\} \right\rvert
\]
for all $\6i \in \Lambda,$ where $\Lambda$ is the set of all possible $\mathcal{T}$-weights 
\[
\Lambda = \bigcup_{i=0}^M \Lambda(i)= \big\{ \6i \in \mathbb{N}^m \mid i_\ell \leq n_\ell \text{ for all } \ell \in \{1, \ldots, m\}\big\}.
\]
The two weight enumerators satisfy $A_i(\mathcal{C})= \sum_{\6i \in \Lambda(i)} A_{\6i}(\mathcal{C})$. 
Recall the $\mathcal{T}$-weight MacWilliams identities, as stated in \cite{simonis}.
\begin{theorem}
    Let $\mathcal{C} \subseteq \mathbb{F}_q^n$ be a linear code, and $\mathcal{C}^\perp$ denote its dual with respect to the standard inner product $\langle \cdot, \cdot \rangle$. We have 
    \[
    A_{\6j}(\mathcal{C}^\perp)= \frac{1}{\lvert \mathcal{C} \rvert} \sum_{\6i \in \Lambda} \prod_{\ell=1}^m K_{j_\ell}^{\textnormal{H}}(i_\ell) A_{\6i}(\mathcal{C}),
    \]
    where $K_{j_\ell}^{\textnormal{H}}(i_\ell)$ denotes the Hamming-metric Krawtchouk coefficient for the $\ell$-th block, which is given as
    \[
    K_{j_\ell}^\textnormal{H}(i_\ell)=\sum_{s=0}^{j_\ell} \binom{n_\ell-i_\ell}{{j_\ell}-s}\binom{i_\ell}{s}(q-1)^{{j_\ell}-s}(-1)^s.
    \]
\end{theorem}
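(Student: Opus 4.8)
The plan is to obtain the $\mathcal{T}$-weight MacWilliams identities from the classical Fourier-analytic argument, applied block by block. The key structural fact is that the standard inner product decomposes along the partition, $\langle \6u, \6v \rangle = \sum_{\ell=1}^m \langle \6u_\ell, \6v_\ell \rangle$, and that the $\mathcal{T}$-weight only records the Hamming weight within each block. First I would repackage the $\mathcal{T}$-weight enumerator as a multivariate weight enumerator: introducing a pair of indeterminates $X_\ell, Y_\ell$ for each block, set
\[
W_{\mathcal{C}}(\6X,\6Y) = \sum_{\6c\in\mathcal{C}} \prod_{\ell=1}^m X_\ell^{\,n_\ell-\wtH(\6c_\ell)}\, Y_\ell^{\,\wtH(\6c_\ell)} = \sum_{\6i\in\Lambda} A_{\6i}(\mathcal{C}) \prod_{\ell=1}^m X_\ell^{\,n_\ell-i_\ell}\, Y_\ell^{\,i_\ell},
\]
so that the coefficient of $\prod_\ell X_\ell^{n_\ell-j_\ell}Y_\ell^{j_\ell}$ is exactly $A_{\6j}(\mathcal{C})$.

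The core step is a Poisson-summation identity over the additive group $\mathbb{F}_q^n$. Fix a nontrivial additive character $\chi$ of $\mathbb{F}_q$ and, for block $\ell$, define $g_\ell\colon \mathbb{F}_q\to\mathbb{C}[\6X,\6Y]$ by $g_\ell(0)=X_\ell$ and $g_\ell(a)=Y_\ell$ for $a\neq0$. The product function $f(\6x)=\prod_{\ell=1}^m\prod_{i\in B_\ell} g_\ell(x_i)$, where $B_\ell$ is the $\ell$-th block of coordinates, evaluates on a codeword to $\prod_\ell X_\ell^{n_\ell-\wtH(\6c_\ell)}Y_\ell^{\wtH(\6c_\ell)}$, so summing $f$ over $\mathcal{C}^\perp$ recovers $W_{\mathcal{C}^\perp}(\6X,\6Y)$. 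Since $f$ factors over coordinates, so does its Fourier transform $\hat f(\6u)=\sum_{\6v\in\mathbb{F}_q^n} f(\6v)\,\chi(\langle\6u,\6v\rangle)$, and a single-coordinate computation (using $\sum_{a\in\mathbb{F}_q}\chi(ua)=0$ for $u\neq0$) gives $\hat g_\ell(0)=X_\ell+(q-1)Y_\ell$ and $\hat g_\ell(u)=X_\ell-Y_\ell$ for $u\neq0$. Combining this with the elementary identity $\sum_{\6v\in\mathcal{C}}\hat f(\6v)=\lvert\mathcal{C}\rvert\sum_{\6u\in\mathcal{C}^\perp} f(\6u)$, which follows by interchanging summations and using that $\sum_{\6v\in\mathcal{C}}\chi(\langle\6u,\6v\rangle)$ equals $\lvert\mathcal{C}\rvert$ when $\6u\in\mathcal{C}^\perp$ and $0$ otherwise, yields the multivariate transform
\[
W_{\mathcal{C}^\perp}(\6X,\6Y) = \frac{1}{\lvert\mathcal{C}\rvert}\, W_{\mathcal{C}}\bigl(X_1+(q-1)Y_1,\,X_1-Y_1,\,\ldots,\,X_m+(q-1)Y_m,\,X_m-Y_m\bigr).
\]

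To conclude, I would extract the coefficient of $\prod_\ell X_\ell^{n_\ell-j_\ell}Y_\ell^{j_\ell}$ from both sides. On the left it is $A_{\6j}(\mathcal{C}^\perp)$. On the right, the summand with weight profile $\6i$ contributes $A_{\6i}(\mathcal{C})\prod_\ell (X_\ell+(q-1)Y_\ell)^{n_\ell-i_\ell}(X_\ell-Y_\ell)^{i_\ell}$, and expanding the $\ell$-th factor binomially, the coefficient of $X_\ell^{n_\ell-j_\ell}Y_\ell^{j_\ell}$ is exactly $\sum_s\binom{n_\ell-i_\ell}{j_\ell-s}\binom{i_\ell}{s}(q-1)^{j_\ell-s}(-1)^s=K_{j_\ell}^{\textnormal{H}}(i_\ell)$. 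Taking the product over $\ell$ and summing over $\6i\in\Lambda$ gives the claimed identity. I do not expect a genuine obstacle: every step is the classical Hamming-metric MacWilliams argument performed simultaneously in each block, and the only points demanding care are the verification that $f$ and $\hat f$ factor along the partition and that the one-block binomial expansion reproduces the Krawtchouk coefficient verbatim. An even shorter route would be to invoke the several-variable MacWilliams theorem directly and to observe that the $\mathcal{T}$-weight enumerator is the specialization of the complete weight enumerator obtained by merging, within each block, all nonzero field values into a single variable.
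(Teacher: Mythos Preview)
Your argument is correct and is precisely the standard character-theoretic (Poisson-summation) proof of the MacWilliams identities, carried out block by block; the factorization of $f$ and $\hat f$ along the coordinate partition and the identification of the blockwise binomial expansion with the Hamming Krawtchouk coefficient are both routine and go through as you describe. The paper, however, does not prove this theorem at all: it is stated there only as a recalled result from the reference of Simonis, with no argument given. So there is no proof in the paper to compare your proposal against; your write-up supplies a complete and standard derivation where the paper merely cites one.
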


We can use the MacWilliams identities of the $\mathcal{T}$-weight to obtain a linear programming bound for the \wH metric.
That is, we maximize $\sum_{\6i \in \Lambda} A_{\6i}$ under the linear constraints 
\begin{equation*}
\begin{aligned}
A_{\6{0}} &=1, \\
A_{\6i} &\geq 0 & \forall \6i &\in \Lambda,  \\
A_{\6i} &=0 & \forall \6i &\in \bigcup\nolimits_{w=1}^{d-1}\Lambda(w), \text{ and}\\
\textstyle\sum\nolimits_{\6i \in \Lambda} \textstyle\prod\nolimits_{\ell=1}^m    K_{j_\ell}^{\textnormal{H}}(i_\ell)A_{\6i} &\geq 0 & \forall \6j &\in \Lambda. 
\end{aligned}
\end{equation*}
The second and fourth condition ensure that the $\mathcal{T}$-weight enumerators of the code and the dual code are non-negative, while the third condition ensures that the code has minimum \wH distance at least $d$. Clearly $A_{\6i}=A_{\6i}(\mathcal{C})$ is a solution with $\sum_{\6i \in \Lambda} A_{\6i}(\mathcal{C})= \lvert \mathcal{C} \rvert.$

In Figure~\ref{fig:more_comparison}, we present a comparison of the provided bounds. 
We give parameters for which each of the bounds, namely  Singleton-like (Thm.~\ref{theo:singleton}),  Plotkin-like (Thm.~\ref{theo:plot}) and  Hamming-like (Thm.~\ref{theo:Hamming}), outperform the others. 
We fix the parameters $\6n=(n_1,n_2)=(7,7)$ and $\bm\lambda=(\lambda_1, \lambda_2)=(1,2)$.
In particular, for $q=2$, we have that for $d \in \{3,\ldots, 11\}$ the Hamming-like bound gives the tightest bound, whereas for $d \geq 12$ the Plotkin-like bound is superior. 
For $q=7$, we have that for $d \leq 19$ the Singleton-like bound outperforms the others, and again, as soon as the condition of the Plotkin-like bound is satisfied it provides the tightest bound.
The LP bound is always at least as tight as the others and improves them for particular parameters. 
Finally, also the  Gilbert-Varshamov (GV) bound, and an explicit code construction are included.
The latter is introduced as Construction \ref{const:simple} in the following section and is optimal for minimum distance $d=5$.

\begin{figure}[t]
\centering%
\begin{subfigure}[b]{0.5\columnwidth}
    \centering
\begin{tikzpicture}[baseline] 
\begin{axis}[
ymax = 14,
ymin = 0,
xmin = 1,
xmax = 21,
label style={font=\footnotesize},
ymajorgrids,
xmajorgrids,
grid style=dashed,
legend style={at={(1,1)},anchor=north east,font=\scriptsize,legend cell align=left, align=left, draw=black},
ylabel near ticks,
xlabel near ticks,
width = 5.5cm,
legend columns=6,
legend entries={Hamming\ , Singleton\ , Plotkin\ , LP\ \ , \ {Const.} \ref{const:simple}\ , GV},
legend to name={mylegend},
every axis y label/.style={font=\footnotesize,at={(current axis.north west)},above =0cm,left = 0.5mm},
ylabel={$k$},
ticklabel style = {font=\tiny},
]

\addplot [line width=1pt,Green] 
  table[row sep=crcr]{%
1 14 \\
2 14 \\
3 11 \\
4 11 \\
5 8 \\
6 8 \\
7 7 \\
8 7 \\
9 5 \\
10 5 \\
11 4 \\
12 4 \\
13 3 \\
14 3 \\
15 2 \\
16 2 \\
17 1 \\
18 1 \\
19 1 \\
20 1 \\
21 1 \\
};

\addplot [line width=1pt,CornflowerBlue]
  table[row sep=crcr]{%
1 14 \\
2 13 \\
3 12 \\
4 11 \\
5 10 \\
6 9 \\
7 8 \\
8 7 \\
9 7 \\
10 6 \\
11 6 \\
12 5 \\
13 5 \\
14 4 \\
15 4 \\
16 3 \\
17 3 \\
18 2 \\
19 2 \\
20 1 \\
21 1 \\
};

\addplot [line width=1pt,YellowOrange]
  table[row sep=crcr]{%
11 4 \\
12 3 \\
13 2 \\
14 2 \\
15 1 \\
16 1 \\
17 1 \\
18 1 \\
19 1 \\
20 1 \\
21 1 \\
};

\addplot [line width=1pt,black, densely dotted]
  table[row sep=crcr]{%
1 14 \\
2 13 \\
3 11 \\
4 10 \\
5 8 \\
6 8 \\
7 7 \\
8 6 \\
9 5 \\
10 4 \\
11 3 \\
12 3 \\
13 2 \\
14 2 \\
15 1 \\
16 1 \\
17 1 \\
18 1 \\
19 1 \\
20 1 \\
21 1 \\
};

\addplot[red, mark = *, only marks]
  table[row sep=crcr]{%
5 8 \\
};

\addplot [line width=1pt,gray, dashed]
  table[row sep=crcr]{%
1 14 \\
2 11 \\
3 9 \\
4 8 \\
5 6 \\
6 5 \\
7 4 \\
8 3 \\
9 2 \\
10 2 \\
11 1 \\
12 1 \\
13 1 \\
14 1 \\
15 1 \\
16 1 \\
17 1 \\
18 1 \\
19 1 \\
20 1 \\
21 1 \\
};

\end{axis}
\end{tikzpicture}
\caption{$q=2$}
\label{fig:code_a}
\end{subfigure}
\begin{subfigure}[b]{0.48\columnwidth}
\centering
\begin{tikzpicture}[baseline] 
\begin{axis}[%
ymax = 14,
ymin = 0,
xmin = 1,
xmax = 21,
label style={font=\footnotesize},
xlabel={$d$},
ymajorgrids,
xmajorgrids,
grid style=dashed,
legend style={at={(1,1)},anchor=north east,font=\scriptsize,legend cell align=left, align=left, draw=black},
xlabel near ticks,
width = 5.5cm,
ymajorticks=false,
ticklabel style = {font=\tiny},
every axis x label/.style={font=\footnotesize,at={(current axis.south east)},below =0cm,right = 0mm},
]

\addplot [line width=1pt,Green] 
  table[row sep=crcr]{%
1 14 \\
2 14 \\
3 12 \\
4 12 \\
5 10 \\
6 10 \\
7 9 \\
8 9 \\
9 8 \\
10 8 \\
11 7 \\
12 7 \\
13 6 \\
14 6 \\
15 5 \\
16 5 \\
17 4 \\
18 4 \\
19 4 \\
20 4 \\
21 3 \\
};

\addplot [line width=1pt,CornflowerBlue]
  table[row sep=crcr]{%
1 14.0000000000000 \\
2 13.0000000000000 \\
3 12.0000000000000 \\
4 11.0000000000000 \\
5 10.0000000000000 \\
6 9.00000000000000 \\
7 8.00000000000000 \\
8 7.00000000000000 \\
9 7.00000000000000 \\
10 6.00000000000000 \\
11 6.00000000000000 \\
12 5.00000000000000 \\
13 5.00000000000000 \\
14 4.00000000000000 \\
15 4.00000000000000 \\
16 3.00000000000000 \\
17 3.00000000000000 \\
18 2.00000000000000 \\
19 2.00000000000000 \\
20 1.00000000000000 \\
21 1.00000000000000 \\
};

\addplot [line width=1pt,black, densely dotted]
  table[row sep=crcr]{%
1 14 \\
2 13 \\
3 12 \\
4 11 \\
5 10 \\
6 9 \\
7 8 \\
8 7 \\
9 7 \\
10 6 \\
11 6 \\
12 5 \\
13 5 \\
14 4 \\
15 3 \\
16 3 \\
17 2 \\
18 2 \\
19 1 \\
20 1 \\
21 1 \\
};

\addplot [line width=1pt, YellowOrange]
  table[row sep=crcr]{%
19 1 \\
20 1 \\
21 1 \\
};

\addplot[red, mark = *, only marks]
  table[row sep=crcr]{%
5 10 \\
};

\addplot [line width=1pt,gray, dashed]
  table[row sep=crcr]{%
1 14 \\
2 13 \\
3 11 \\
4 10 \\
5 9 \\
6 8 \\
7 7 \\
8 6 \\
9 5 \\
10 5 \\
11 4 \\
12 3 \\
13 3 \\
14 2 \\
15 2 \\
16 2 \\
17 1 \\
18 1 \\
19 1 \\
20 1 \\
21 1 \\
};
\end{axis}
\end{tikzpicture}
\caption{$q=7$}         
\label{fig:code_b}
\end{subfigure}\hfill
\vspace{0.2cm}
\pgfplotslegendfromname{mylegend}
\caption{Bounds on the code size for $\6n=(7,\,7)$, $\bm{\lambda} = (1,\,2)$.}
\label{fig:more_comparison}
\end{figure}

\section{Code Construction}\label{sec:new_codes}

This section considers the case $d=5$ and $\bm{\lambda} = (1,\,2)$.
That is, correctable error patterns $(\6e_1, \6e_2)$ that satisfy $\wtH(\6e_1)\leq 2$ and $\wtH(\6e_2)=0$, or $\wtH(\6e_1)= 0$ and $\wtH(\6e_2)\leq1$.
In this setting, binary Hamming codes \cite{moon2018weighted} and binary generalized Goppa codes \cite{bezzateev2013class} are perfect for particular choices of block lengths.
We propose a different, simple code construction that works over an arbitrary finite field. The construction is optimal
for various field sizes and block lengths.

\begin{construction}\label{const:simple}
Let $\bm{\lambda}=(1,2)$, and $\mathcal{C} = \ker(\6H)$ for
\[
\6H = 
\begin{pmatrix}
    \6H_1 & \6H_2\\
    \6H_3 & \6 0\\
\end{pmatrix},
\]
with $\6H_1\in\mathbb{F}_q^{r_1 \times n_1}$ , $\6H_2\in\mathbb{F}_q^{r_1 \times n_2}$ and $\6H_3\in\mathbb{F}_q^{r_2 \times n_1}$. 
We pick 
\begin{itemize}
    \item $\6H_2, \6H_3$ as parity-check matrices of codes with minimum Hamming distance three,
    \item and $(\6H_1^\top, \6H_3^\top)^\top$ as a parity-check matrix of a code with minimum Hamming distance five. 
\end{itemize}
Then,~$\mathcal{C}$ has minimum \wH distance $d(\mathcal{C}) = 5$.
\end{construction}

\begin{proof}
To prove that Construction~\ref{const:simple} provides error-correction capability $\tau(\mathcal{C})\geq 2$, we provide a simple decoding algorithm.
Let $\6r =(\6r_1, \6r_2)= \6c + \6e = (\6c_1+\6e_1, \6c_2+\6e_2)$ with $\6c\in\mathcal{C}$ and $\text{wt}(\6e)\leq 2$.
To decode $\6r$, one calculates the syndrome $\6s_3$ of $\6r_1$ with respect to $\6H_3$.
Since $\6H_3$ enables us to detect two errors, $\6s_3=\60$ if and only if $\6e_1 = \60$.
If $\6s_3=\60$, we get $\6e_1 = \60$ and $\wtH(\6e_2) \leq 1$ and can thus use $\6H_2$ to correct.
If $\6s_3 \neq \60$, then $\wtH(\6e_1) \leq 2$ and $\6e_2 = \60$. Thus, $(\6H_1^\top, \6H_3^\top)^\top$ can correct the error $\6e_1$.
\end{proof}

Unlike for the general case (see Section~\ref{sec:new_weight}) the error-correction capability of the codes provided by Construction~\ref{const:simple} is precisely characterized by the minimum distance:
let $\6c_2\in\ker(\6H_2)$  with $\wtH(\6c_2) = 3$.
Then, $(\60,\,\6c_2)\in\mathcal{C}$ implies
\[
\tau(\mathcal{C}) \leq \min_{\6r\in\mathbb{F}_q^{n_2}} \max\{\text{wt}(\6r), \text{wt}(\6c_2-\6r)\} - 1 = 2.
\]

\begin{theorem}
For $q=2$ and $n_1 = n_2 = 2^m-1$, Construction~\ref{const:simple} achieves the highest dimension possible according to the Hamming-like bound (Thm.~\ref{theo:Hamming}), that is
\[
k^* = 2(2^m-m-1)\quad\text{or}\quad n-k^* = 2 m.
\]
\end{theorem}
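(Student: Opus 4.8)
The plan is to prove two complementary facts: (i) \Cref{const:simple} can be instantiated over $\mathbb{F}_2$ with $n_1=n_2=2^m-1$ so that $\dim(\mathcal{C})=k^*$, and (ii) the \wH sphere-packing bound of \Cref{theo:Hamming} forces any code of minimum \wH distance $5$ with these block lengths to have at most $2^{k^*}$ codewords, so $k^*$ is the largest dimension that bound permits.

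For (i) I would take both $\6H_2$ and $\6H_3$ to be parity-check matrices of the binary $[2^m-1,\,2^m-1-m,\,3]$ Hamming code, so $r_1=r_2=m$, and pick $\6H_1$ so that $(\6H_1^\top,\6H_3^\top)^\top\in\mathbb{F}_2^{2m\times(2^m-1)}$ is a parity-check matrix of a two-error-correcting binary BCH code of length $2^m-1$ (in the usual presentation $\6H_3$ carries the rows associated with $\alpha^i$ and $\6H_1$ those associated with $\alpha^{3i}$ for a primitive $\alpha\in\mathbb{F}_{2^m}$); by the BCH bound this code has minimum distance at least $5$, so the hypotheses of \Cref{const:simple} hold and $d(\mathcal{C})=5$. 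It remains to check $\mathrm{rank}(\6H)=2m$: if $\6a^\top(\6H_1\ \6H_2)+\6b^\top(\6H_3\ \60)=\60$, then $\6a^\top\6H_2=\60$ gives $\6a=\60$ since $\6H_2$ has full row rank, and then $\6b^\top\6H_3=\60$ gives $\6b=\60$. Hence $\dim(\mathcal{C})=n-2m=2(2^m-1)-2m=2(2^m-m-1)=k^*$. (This needs $2^m-1\ge 5$, i.e.\ $m\ge 3$, for the classical codes involved to exist.)

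For (ii) I would compute the radius-$2$ weighted-Hamming ball for $\bm{\lambda}=(1,2)$ with $n_1=n_2=N:=2^m-1$. The only $\mathcal{T}$-weights of \wH weight at most $2$ are $(0,0),(1,0),(2,0),(0,1)$, so
\[
\lvert\ball_2(n,2,\bm{\lambda})\rvert=1+N+\binom{N}{2}+N=1+2N+\tfrac{N(N-1)}{2},
\]
and substituting $N=2^m-1$ and simplifying gives $\lvert\ball_2(n,2,\bm{\lambda})\rvert=2^{2m-1}+2^{m-1}=2^{m-1}(2^m+1)$. For every $m\ge 1$ this lies strictly between $2^{2m-1}$ and $2^{2m}$, hence
\[
2^{\,n-2m}<\frac{2^n}{\lvert\ball_2(n,2,\bm{\lambda})\rvert}<2^{\,n-2m+1},
\]
so \Cref{theo:Hamming} yields $\log_2 A_2(n,5,\bm{\lambda})<n-2m+1$; that is, no code of minimum \wH distance $5$ with these block lengths has dimension exceeding $n-2m=k^*$. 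Together with (i) this proves the claim, and since $n-2m=2m$ it also gives the stated form $n-k^*=2m$.

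I expect the only mildly delicate point to be the ball-size identity and the resulting two-sided estimate $2^{2m-1}<\lvert\ball_2(n,2,\bm{\lambda})\rvert<2^{2m}$; the rest is standard bookkeeping with binary Hamming and BCH codes together with the rank argument above. There is no real obstacle beyond keeping the elementary arithmetic correct.
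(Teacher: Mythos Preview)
Your proposal is correct and follows essentially the same route as the paper: compute $\lvert\ball_2(n,2,\bm{\lambda})\rvert=2^{m-1}(2^m+1)$, deduce from \Cref{theo:Hamming} that the redundancy must be at least $2m$, and instantiate \Cref{const:simple} with Hamming parity-check matrices for $\6H_2,\6H_3$ and a BCH extension for $\6H_1$. You add two points the paper leaves implicit (the rank-$2m$ check for $\6H$ and the requirement $m\ge 3$), which is fine. One slip to fix: the closing clause ``since $n-2m=2m$'' is false in general; you simply mean that $k^*=n-2m$ gives $n-k^*=2m$.
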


\begin{proof}
For $\bm{\lambda}= (1,2)$, we get  $\lvert \ball_2(n,2, \bm{\lambda})\rvert = 2^{m-1}  (2^m+1).$
According to the Hamming bound, the minimum redundancy required to correct all errors in $\ball_2(n,2, \bm{\lambda})$ is
\[
\left\lceil \log_2(\lvert \ball_2(n,2, \bm{\lambda})\rvert) \right\rceil = \left\lceil \log_2(2^{m-1}) + \log_2(2^{m}+1)\right\rceil = 2 m,
\]
i.e., no code with $k>k^*$ can have $d=5$.
Next, we show that Construction \ref{const:simple} requires no more than $2 m$ bits of redundancy.
We pick $\6H_2, \6H_3 \in \mathbb{F}_2^{m\times n_\ell}$ as  parity-check matrices of a Hamming code.
Then, $\6H_1\in \mathbb{F}_2^{m\times n_\ell}$ can be picked such that it extends $\6H_3$ to a parity-check matrix of a double-error-correcting BCH code \cite[Chapter~3]{macwilliams1977theory}. 
\end{proof}
Further, using Construction \ref{const:simple}, we can build MWHD codes, which are \emph{not necessarily} MDS codes.
\begin{theorem}
For $q\geq \max\{n_1, n_2\}$ and $n_1 \geq 5$, one can use Construction \ref{const:simple} to achieve the Singleton-like bound, that is 
\[
k^* = n_1 + n_2 - 4 \quad\text{or}\quad n-k^* = 4.
\]
\end{theorem}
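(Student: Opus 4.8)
The plan is to realise Construction~\ref{const:simple} with generalized Reed--Solomon (GRS) codes, whose existence over $\mathbb{F}_q$ is guaranteed by the hypothesis $q\geq\max\{n_1,n_2\}$, arranged so that the parity-check matrix $\6H$ has exactly four rows, all linearly independent; this gives $\dim(\mathcal{C})=n-4$. First I would pin down the target: for $\bm{\lambda}=(1,2)$ and $d=5$, the hypothesis $n_1\geq 5$ forces the index $\ell^*$ of Theorem~\ref{theo:singleton} to be $0$, since $\sum_{\ell=1}^{0}n_\ell\lambda_\ell=0<5$ while $\sum_{\ell=1}^{1}n_\ell\lambda_\ell=n_1\geq 5$. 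Hence the Singleton-like bound reads $\log_q|\mathcal{C}|\leq n_1+n_2-\lfloor(5-1)/\lambda_1\rfloor=n_1+n_2-4$, so $k^*=n_1+n_2-4$, and it suffices to exhibit one instance of Construction~\ref{const:simple} of dimension $n-4$.

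Next I would choose the three ingredient matrices with $r_1=r_2=2$. Fix $n_1$ distinct elements $\alpha_1,\ldots,\alpha_{n_1}\in\mathbb{F}_q$ (possible since $n_1\leq q$), let $\6H_3\in\mathbb{F}_q^{2\times n_1}$ have $j$-th column $(1,\alpha_j)^\top$ and $\6H_1\in\mathbb{F}_q^{2\times n_1}$ have $j$-th column $(\alpha_j^2,\alpha_j^3)^\top$. Then every $2\times 2$ minor of $\6H_3$ equals $\alpha_k-\alpha_j\neq 0$, so $\6H_3$ is the parity-check matrix of an MDS $[n_1,n_1-2,3]$ code; and, up to a row permutation, $(\6H_1^\top,\6H_3^\top)^\top$ is the $4\times n_1$ Vandermonde matrix in the $\alpha_j$, i.e., the parity-check matrix of an MDS $[n_1,n_1-4,5]$ GRS code. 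For $\6H_2\in\mathbb{F}_q^{2\times n_2}$ take the matrix with $j$-th column $(1,\beta_j)^\top$, with $\beta_1,\ldots,\beta_{n_2}\in\mathbb{F}_q$ distinct (possible since $n_2\leq q$), again a parity check of Hamming distance $3$. All hypotheses of Construction~\ref{const:simple} are met, hence $d(\mathcal{C})=5$.

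Finally I would read off the dimension. Since $\6H$ has only four rows, $\dim(\mathcal{C})\geq n-4$; together with the Singleton-like bound and $d(\mathcal{C})=5$ this forces $\dim(\mathcal{C})=n-4=k^*$, so $\mathcal{C}$ is MWHD. (One may also check full row rank directly: if $a,b\in\mathbb{F}_q^2$ satisfy $a^\top\6H_1+b^\top\6H_3=\mathbf{0}$ and $a^\top\6H_2=\mathbf{0}$, then $a=\mathbf{0}$ because $\6H_2$ has rank $2$, whence $b=\mathbf{0}$ because $\6H_3$ has rank $2$.) I do not expect a genuine obstacle; the one step that needs care is precisely that the three Hamming-distance requirements of Construction~\ref{const:simple} have to be met \emph{simultaneously} with total redundancy only $4$, which is why the same $4\times n_1$ GRS parity check must restrict, on a chosen pair of its rows, to a distance-$3$ parity check — I take the two lowest-degree rows so that the relevant minors are plain Vandermonde determinants and cannot vanish.
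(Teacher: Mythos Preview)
Your proof is correct and follows essentially the same route as the paper: compute the Singleton-like bound (using $n_1\ge 5$ so that $\ell^*=0$) to see that four redundancy symbols are necessary, and then instantiate Construction~\ref{const:simple} with $r_1=r_2=2$ using MDS codes so that the stacked matrix $(\6H_1^\top,\6H_3^\top)^\top$ is a distance-$5$ parity check. The only difference is presentational: the paper simply invokes ``single- and double-error-correcting MDS codes'' over $\mathbb{F}_q$, whereas you write the Vandermonde/GRS matrices out explicitly and verify full row rank of $\6H$ directly; this makes your argument slightly more self-contained but is not a different approach.
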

\begin{proof}
Since $n_1\lambda_1 \geq d$, the Singleton-like bound in \Cref{theo:singleton} implies $\lvert\mathcal{C}\rvert \leq q^{n-(d-1)/\lambda_1} = q^{n-4},$ i.e., any code with minimum \wH distance $5$ requires $n-k \geq 4$.
Next, we show that Construction \ref{const:simple} requires no more than $4$~redundancy symbols.
We pick $\6H_2, \6H_3 \in \mathbb{F}_q^{2\times n_\ell}$ as parity-check matrices of a single-error-correcting MDS code.
Then, $\6H_1\in \mathbb{F}_q^{2\times n_1}$ can be picked such that it extends $\6H_3$ to a parity-check matrix of a double-error-correcting MDS code. 
\end{proof}

\section{Conclusion}\label{sec:conc}

This paper studies the \wH metric, which is tailored to independent parallel $q$-ary symmetric channels.
For suitable scaling factors, minimum-distance decoding achieves optimal performance.
We observe that the \wH metric is not normal, i.e., there are codes for which the error-correction capability exceeds half the minimum distance.
We bound the error-correction capability of a code via its minimum distance, which we, in turn, bound by generalizing the Singleton, Plotkin, Hamming, Gilbert-Varshamov, and linear programming bounds.
A simple code construction with optimal minimum distance is proposed for specific parameters.

Finally, we want to point out that some applications allow parallel channels to be operated with different alphabets \cite{sidorenko2005polyalphabetic}; we delegate an extension of the \wH metric to polyalphabetic codes to future work.

\section*{Acknowledgment}
Violetta Weger is  supported by the European Union's Horizon 2020 research and innovation programme under the Marie Sk\l{}odowska-Curie grant agreement no. 899987.
Sebastian Bitzer  acknowledges the financial support by the Federal Ministry of Education and
Research of Germany in the program of “Souverän. Digital. Vernetzt.”. Joint project 6G-life, project
identification number: 16KISK002.
Alberto Ravagnani is supported by the Dutch Research
Council via grants VI.Vidi.203.045, OCENW.KLEIN.539, and by the Royal Academy of
Arts and Sciences of the Netherlands.
\bibliographystyle{plain}
\bibliography{references}

\end{document}